	\newcommand{\ncd}{\newcommand}
	\ncd{\mrm}    {\mathrm}
	\ncd{\beq} {\begin{equation}}
	\ncd{\eeq} {\end{equation}}
	\def\d{{\rm d}}
	\newtheorem{Theorem}{Theorem}
\begin{document}

	\title{Liouville's Theorem and the canonical measure for nonconservative systems from contact geometry}

\author{A. Bravetti}
        \email{bravetti@correo.nucleares.unam.mx}
	\affiliation{Instituto de Ciencias Nucleares, Universidad Nacional Aut\'onoma de M\'exico,\\ AP 70543, M\'exico, DF 04510, Mexico.}

	\author{D. Tapias}
	\email{diego.tapias@nucleares.unam.mx }
	\affiliation{Facultad de Ciencias, Universidad Nacional Aut\'onoma de M\'exico,\\ AP 70543, M\'exico, DF 04510, Mexico.}

	\begin{abstract}
Standard statistical mechanics of conservative systems relies on the symplectic geometry of the phase space. This is exploited to derive Hamilton's equations, Liouville's theorem and to find the canonical invariant measure. In this work we analyze the statistical mechanics of a class of nonconservative systems stemming from contact geometry. In particular, we find out the generalized Hamilton's equations, Liouville's theorem and the microcanonical and canonical measures invariant under the contact flow. Remarkably, the latter measure has a power law density distribution with respect to the standard contact volume form. Finally, we argue on the several possible applications of our results.
	\end{abstract}



\maketitle


\section{Introduction}
	Statistical mechanics is one of the most powerful tools for the investigation of the {collective properties} of large systems and therefore it has been a 
	great success in all fields of natural and human sciences.
	Nevertheless, the full formal understanding of statistical mechanics can only be achieved for conservative systems, i.e.
	systems whose description can be given in terms of Hamilton's equations.
	Such construction is based on Hamilton's principle, symplectic geometry, Liouville's theorem and Gibbs canonical measure \cite{TuckermanBook}.
	However, only isolated systems are conservative and therefore most of the interesting systems evade the standard treatment.
	A central problem for both the theoretical development of statistical mechanics and its numerical implementations 
	is that of finding a general description of the dynamics of {nonconservative systems}, including conditions for 
	the existence of invariant measures. 
	 {In all cases} one is faced with two major {issues}. First,
	to derive the equations of motion and second, to prove the existence of an invariant measure along the flow.

{Over the last decades, there have been many attempts to extend the statistical mechanics of standard symplectic Hamiltonian systems. 
One of the main motivations is the description of the dynamics of equilibrium ensembles different from the microcanonical one. In order to accomplish this task, 
non-Hamiltonian equations of motion have been proposed which mimic different kinds of ensembles by means of the so-called \emph{extended systems}.
Such dynamics has been extensively used to perform numerical simulations (see e.g. \cite{1980JChPh722384A,mukunda1974classical,hoover1985canonical,1984Evans,1998Morriss,evansbook} 
for some relevant references). Later, it was shown that the structure of the equations of motion of such extended 
systems can be framed within an algebraic scheme through the use of generalized Poisson brackets \cite{sergi2001non,sergi2003non,SergiGiaquinta}.
Moreover, another interesting perspective on non-Hamiltonian systems has been proposed, based on a more geometric-oriented reasoning \cite{Tuckerman2001, Ezra2004, TarasovJPA}.}

%
	
	{Here we focus on the geometric approach, since this is a natural setting for the construction of invariant tools, 
	as geometric objects are, by construction, coordinate independent.
	In particular, the objective of this work is the extension of the statistical mechanics of conservative Hamiltonian systems
	to a class 
	of nonconservative systems stemming from contact geometry.
	 This choice is motivated by different reasons. First of all contact Hamiltonian dynamics is the most natural generalization of symplectic dynamics \cite{arnold2001dynamical}.}
	 Moreover, {we are prompted by} the analogy with thermodynamic systems, 
	 whose phase space is a contact manifold \cite{Mrugaa:2000aa,BoyerCompletelyIntegrable,2014Bravb,
	 2014Bravc}.
	 Besides, recent works suggest
	 that numerical techniques for Monte Carlo simulations can be improved by  using  contact flows  \cite{2014arXiv1405.3489B}. 
	 Finally, 
	 although it is most common to formulate classical mechanics in terms of Hamilton's
	 equations and symplectic geometry, 
	 nevertheless the description by means of the Hamilton-Jacobi equation results in a contact phase space \cite{Rajeev2008768}. 
	 {For all these reasons, here we provide a number of results for the statistical mechanics of
	 the class of nonconservative contact Hamiltonian systems.
	Our} perspective presents several desirable properties.
	First of all, the equations of motion are derived from the geometric picture as Hamiltonian flows (in the contact context), as in the symplectic case.
	Secondly, for all such systems we  {show that there exist both a uniform invariant measure along the orbits of the flow in which the contact Hamiltonian is conserved and
	a distinct invariant measure along the orbits in which it is not conserved. Remarkably, the latter measure} depends only on the Hamiltonian,
	 thus being a generalization of Gibbs' canonical measure. 	 
	 In the third place, the probability distribution associated with this measure with respect to the standard contact volume form
	 is a power law, which is the distribution encountered everywhere in nature \cite{powerlawreview}. 
	Finally,  the contact formulation allows for an elegant and natural understanding {of  nonconservative
	 contact Hamiltonian systems} as a generalization of the conservative case, recovering symplectic dynamics as a special case.

\section{Symplectic geometry and the phase space of conservative systems}
{Conservative systems} are dynamical systems for which the {mechanical} energy is conserved.
Therefore, their description can be given in terms of the Hamiltonian function, which gives 
Hamilton's equations of motion in the phase space. 
In particular, the phase space of a conservative system
is the cotangent bundle of the configuration manifold, that is
{a $2n$-dimensional manifold} $\Gamma$ coordinatized by the particles' generalized coordinates and momenta $q^{a}$ and $p_{a}$, with 
$a=1,\dots,n$. 
Such manifold is naturally endowed
with a $1$-form 
	\beq\label{taut}
	\alpha=p_{a}\d q^{a}\,, 
	\eeq
{where here} and in the following Einstein's summation convention over repeated indices is assumed.
The exterior derivative of $\alpha$ defines the standard symplectic form on $\Gamma$, that is
	\beq\label{Omega}
	\Omega=\d\alpha=\d p_{a}\wedge \d q^{a}\,.
	\eeq
Given the Hamiltonian function $H(q^{a},p_{a})$ on $\Gamma$, Hamilton's equations of motion
follow from 
	\beq\label{Hameq1}
	-\d H = \Omega(X_{H},\cdot)\,,
	\eeq 
where $X_{H}$ is the \emph{Hamiltonian vector field} defining the evolution of the system.
Therefore, the equations of motion take the standard form
	\beq\label{Hameq2}
	\dot q^{a}=\frac{\partial H}{\partial p_{a}} \qquad \dot p_{a}=-\frac{\partial H}{\partial q^{a}}\,.
	\eeq
A system whose evolution is governed by \eqref{Hameq2} is usually called a \emph{Hamiltonian system}.
However, in this work we want to generalize the notion of a Hamiltonian system to comprehend the case of 
Hamiltonian systems in contact geometry.
Therefore we will refer to the dynamic system given by \eqref{Hameq2} as a \emph{conservative system}.

The crucial problem in statistical mechanics is to find an {invariant measure} on the phase space for the flow $\phi$ associated to the equations of motion.
Geometrically, an \emph{invariant measure} $\mu$ is 
a volume form on $\Gamma$
such that 
	\beq
	\phi_{t}^{*}\left(\mu_{t}\right)=\mu
	\eeq 
for any $t$, where $\phi_{t}^{*}$ represents the pullback induced by the flow\footnote{{In usual statistical applications based on symplectic Hamiltonian flow the invariant measure is most naturally defined in terms of the pushforward. However,
 in this work we consider a dynamics based on the contact Hamiltonian flow (c.f. Section \ref{RCG}), which can exhibit fixed points that
 obstruct the pushforward \cite{2014arXiv1405.3489B}. 
 Therefore a definition by means of the pullback seems more appropriate in this case.}}.
We always assume that the measure on $\Gamma$ is given in terms of a probability density $\rho(q^{a},p_{a})$, which means
that $\d\mu$ can be written as $\d\mu=\rho(q^{a},p_{a})\d x$, where $\d x$ is a short form to indicate the 
 volume element of $\Gamma$.
In the usual statistical mechanics of conservative systems, the phase space is a symplectic manifold 
equipped with the standard volume element $\Omega^{n}$, {where $\Omega$ is given by} \eqref{Omega}.
 Liouville's theorem can be written in a compact and geometric form as
	 \beq\label{LiouvilleTh}
	 \pounds_{X_{H}}\Omega=0\,,
	 \eeq
 where $\pounds_{X_{H}}$ denotes the Lie derivative with respect to the symplectic flow $X_{H}$ associated with the Hamiltonian $H$ - c.f. eq. \eqref{Hameq2}. 
 This means  that the volume element $\Omega^{n}$ is invariant  along any symplectic flow.
 Therefore $\Omega^{n}$ is a natural measure on any symplectic manifold and because the space of measures on manifolds is one-dimensional, any other measure must be proportional,
 for some proportionality function $\rho(q^{a},p_{a})$.
Moreover, the evolution of any function is governed by the Poisson brackets. 
Therefore, one can also find the conditions for the probability density $\rho(q^{a},p_{a})$ to be invariant along the flow of $X_{H}$, resulting in Liouville's equation
	\beq\label{LiouvilleEq}
	\frac{\d \rho}{\d t}=\frac{\partial \rho}{\partial t}+\{\rho,H\}
	=0\,.
	\eeq
Solutions to eq. \eqref{LiouvilleEq} can be found easily. For example, 
if $\rho$ is positive, integrable and depends only on the Hamiltonian, then the resulting measure $\d\mu\equiv\rho(H)\Omega^{n}$
is an invariant of the flow.
 This fact is of central importance in statistical mechanics, because it guarantees 
 - {together with other hypotheses such as \emph{ergodicity} or the more recent concept of \emph{typicality} \cite{goldstein2006canonical,goldstein2004boltzmann}} -
 that one can perform measures along the evolution of the system and
 exchange time averages with ensemble averages, which is the starting point of all statistical mechanical calculations \cite{TuckermanBook}.

\section{Nonconservative contact Hamiltonian systems}
Nonconservative systems are ubiquitous in nature. 
In fact, strictly speaking, only abstract isolated systems for which one can write down completely the microscopic dynamics are conservative \cite{GalleyPRL}. 
Therefore, the majority of systems cannot be thought as conservative.
When the system is nonconservative, its dynamics cannot be given in terms of a standard mechanical Hamiltonian function as in \eqref{Hameq2}.
As an example, we consider here 
the simplest dissipative system, whose dynamical equations can be given as
	\beq\label{dissipative}
	\dot q^{a}=\frac{\partial H}{\partial p_{a}} \qquad \dot p_{a}=-\frac{\partial H}{\partial q^{a}}-\alpha p_{a}.
	\eeq 
The term $\alpha p_{a}$ in the second equation is obviously a nonconservative force (it is a dissipative term), 
and hence it cannot be derived from a Hamiltonian function  
 in the standard symplectic picture of the phase space. This is the reason why nonconservative systems are usually referred to as \emph{non-Hamiltonian systems}.
 However, we will propose in the next section a Hamiltonian formulation for a large class of such systems, including the basic example provided in \eqref{dissipative}, 
 by simply assuming that  
{the phase space in this case has to be a contact manifold}. 
Hence, we will simply refer to such systems as {\emph{contact Hamiltonian systems} \cite{BoyerCompletelyIntegrable}}. 
It is important to remark at this point the two most relevant problems {for the formulation} of the statistical mechanics of nonconservative systems:
	\begin{itemize}
	\item[i)] the equations of motion have to be provided by some other means, due to the fact that dissipative terms cannot be achieved from Hamilton's equations \eqref{Hameq2}.
	\item[ii)] 	The divergence of the flow, defined by the relation \cite{Ezra2004}
			\beq\label{divergence}
			\pounds_{X}\Omega=\left(\text{div}_{\Omega}X\right)\Omega\,,
			\eeq 
	 in this case does not vanish, and therefore the standard Liouville
	theorem does not apply. Thus one needs to prove for any such flow the existence of an appropriate invariant measure.
	\end{itemize} 

{In the following we will provide new geometric results that are useful for the statistical mechanics of contact Hamiltonian systems.}
 We will show that this picture can be seen as a natural generalization of the classical (symplectic)
formulation of the phase space of conservative systems. Moreover, our formulation automatically resolves the problems i) and ii) stated  above.

\subsection{Review of contact Hamiltonian systems}\label{RCG}
Let us  
start now by reviewing briefly some concepts of 
contact geometry that will be useful later {(see e.g. \cite{arnold2001dynamical,Mrugaa:2000aa,BoyerCompletelyIntegrable,2014Bravc,2014Bravb} for more details).}
A \emph{contact manifold} $\mathcal T$ is a {$(2n+1)$-dimensional} manifold endowed with a 1-form $\eta$ that 
satisfies the non-integrability condition
\begin{equation}\label{integraxx}
   \eta \wedge (\d \eta)^n \neq 0\,.
\end{equation}
The left hand side in \eqref{integraxx} thus provides the \emph{standard volume form} on $\mathcal T$, analogously to \eqref{Omega} for the symplectic case.
Associated to $\eta$ there is always a  global vector field $\xi$ --
the \emph{Reeb vector field} -- defined uniquely by the two conditions
\beq\label{Reeb}
\eta(\xi)=1 \quad \text{and} \quad \d \eta (\xi,\cdot)=0\,.
\eeq
Now we want to define the dynamics in the phase space $\mathcal T$. 
{Using Cartan's identity 
	\beq\label{cartan}
	\pounds_{X_{h}}\eta=\d \eta(X_{h},\cdot)+\d[\eta(X_{h})](\cdot)
	\eeq
and}
the $1$-form $\eta$,
we can associate to every
differentiable function $h:\mathcal{T} \rightarrow \mathbb{R}$, 
 a vector field $X_{h}$, called the \emph{Hamiltonian vector field generated by $h$},
defined through the relation
	\beq\label{isomorphismLiealg}
	h = \eta\left(X_h\right)\,,
	\eeq
and we say that $h$ is a \emph{contact Hamiltonian} \cite{arnold2001dynamical,Mrugaa:2000aa,BoyerCompletelyIntegrable}.
{Equations \eqref{cartan} and \eqref{isomorphismLiealg} define the contact Hamiltonian dynamics and are the analogue of \eqref{Hameq1} in the symplectic case.}

 Using  the above identification between vector fields and functions on $\mathcal T$, one can define the \emph{Jacobi
 brackets}
 	\beq\label{Jacobibr}
	\left\{\eta(X),\eta(Y)\right\}_{\eta}=\eta\left([X,Y]\right)
	\eeq   
which give a Lie algebra structure to functions over $\mathcal T$ and
 are the contact analogue of the Poisson brackets of symplectic geometry  \cite{2014Bravb}.
When the 1-form $\eta$ defining the contact structure and the Hamiltonian function $h$ are fixed on $\mathcal T$,
we say that the quadruple 
$(\mathcal T,\mathcal D, \eta,h)$ 
is a \emph{contact Hamiltonian system} \cite{BoyerCompletelyIntegrable}.

Additionally, it is always possible to find a set of local (Darboux) coordinates $(S,q^{a},p_{a})$ for $\mathcal{T}$ such that the 1-form $\eta$ can be written as
	\begin{equation}\label{1stform}
	\eta = \d S + p_a \d q^a\,.
	\end{equation}

The Reeb vector field is given in local Darboux coordinates by $\xi=\frac{\partial}{\partial S}$ 
and  generates a natural splitting of the tangent bundle, that is 
\beq\label{splitting}
T\mathcal T = V_{\xi}\oplus \mathcal{D}\, ,
\eeq
where $V_{\xi}$ is the \emph{vertical} sub-space generated by $\xi$ and $\mathcal D$ is the \emph{horizontal} (\emph{contact})  
distribution given by $\mathcal D={\rm ker} \eta$.
It is always possible to find locally a basis of the tangent space which is adapted to the splitting  \eqref{splitting}, given by
\cite{2014Bravb}
	\beq\label{Dbasis}
	\Big{\{}\xi,\hat{P}^{i},\hat{Q}_{i}\Big{\}}=\left\{\frac{\partial}{\partial S},\frac{\partial}{\partial p_i},  p_i \frac{\partial}{\partial S}-\frac{\partial}{\partial q^i} \right\}\,.
	\eeq
{Remarkably,  the vectors} of such basis satisfy the commutation relations
	\beq
	\label{halgebra}
	[\hat P^i,\hat Q_j] = \delta^i_{\ j} \xi, \quad [\xi,\hat P^i] = 0 \quad \text{and} \quad [\xi, \hat Q_i] = 0,
	\eeq
{and therefore the contact phase space is locally isomorphic to} the $n$th Heisenberg group \cite{2014Bravb}. 


{Let $(\mathcal T,\mathcal D, \eta,h)$ be a contact Hamiltonian system.}
In local Darboux coordinates the Hamiltonian vector field $X_{h}$ takes the form 	
	\beq
	\begin{split}
	\label{generic.ham}
	X_h =& \left( h - p_a \frac{\partial h}{\partial p_a}\right)\frac{\partial}{\partial S} \\
		+& \left(p_a \frac{\partial h}{\partial S}-\frac{\partial h}{\partial q^a} \right)\frac{\partial }{\partial p_a}\\ 
		+& \left(\frac{\partial h }{\partial p_a} \right)\frac{\partial }{\partial q^a}\,.
	\end{split}
	\eeq	
{In terms of the adapted basis introduced in \eqref{Dbasis}}, we can express the action of $X_{h}$ on a function $f$ as
	\beq\label{XhJacobi}
	X_{h}f=h\,\xi(f) +\hat Q_{a}(h)\hat P^{a}(f)-\hat P^{a}(h)\hat Q_{a}(f)\,.
	\eeq
We say that a function $f\in C^{\infty}(\mathcal T)$ is a \emph{first integral} of the contact Hamiltonian system $(\mathcal T,\mathcal D, \eta,h)$ if
$f$ is constant along the flow of $X_{h}$, that is if $X_{h}f=0$.
From the above equation \eqref{XhJacobi} it follows in general that 
	\beq\label{flowofh}
	X_{h}h=h\,\xi(h)\,,
	\eeq
and therefore the Hamiltonian function itself is not in general a first integral
of its flow. Indeed $h$ is a first integral if and only if it is a \emph{basic function}, i.e. $\xi(h)=0$.
Finally, it is worth noting that eq. \eqref{XhJacobi}
implies that for every function $f$ that depends only on $h$ 
	\beq\label{evolutionfh}
	X_{h}f(h)=h\,\xi\left(f(h)\right)=h\,f'(h)\xi(h)
	\eeq 
and as a consequence, any $f(h)$ in general is constant only along the flow lines with $h = 0$.

According to equation \eqref{generic.ham}, the flow of $X_{h}$ can be explicitly written in Darboux coordinates as
	\begin{empheq}[left=\empheqlbrace]{align}
	\label{z1}
	& \dot S \,\,\,= h - p_a \frac{\partial h}{\partial p_a}\,,\\
	\label{z2}
	& \dot{p}_{a} \,=  -\frac{\partial h}{\partial q^a} + p_a \frac{\partial h}{\partial S}\,,\\
	\label{z3}
	& \dot{q}^{a} = \frac{\partial h }{\partial p_a} \,,
	\end{empheq}
where its similarity with Hamilton's equations of symplectic mechanics is manifest - c.f. eq. \eqref{Hameq2}.
In fact, these are the generalization of Hamilton's equations to a contact manifold. 
In particular, when $h$ is a basic function
 equations \eqref{z2} and \eqref{z3} give exactly Hamilton's equations. 
Finally, let us note that \eqref{z1} is an extra equation for the evolution of the variable $S$. Such equation
can be rewritten by means of \eqref{z3} as $\dot S=h-p_{a}\dot{q}^{a}$, suggesting that $S$ is a generalization
of Hamilton's principal function to the contact case.
Eqs. \eqref{z1}-\eqref{z3} generalize the symplectic equations \eqref{Hameq2} and therefore can include
a large class of models, such as e.g. the basic dissipative systems in \eqref{dissipative} 
or the more sophisticated `thermostatted dynamics' \cite{TuckermanBook,evansbook}.  
However, the difference with previous proposals is that in our approach the equations of motion follow
from a (contact) Hamiltonian.

\subsection{Invariant distributions for nonconservative contact Hamiltonian systems}\label{LiouvTh}
Now let us apply the machinery of contact Hamiltonian flows to derive
the invariant measure {for this class of nonconservative systems.} 
From eq. \eqref{generic.ham}, it is easy to see that 
	\beq
	\pounds_{X_{h}}\eta=\xi(h)\,\eta
	\eeq
and that 
	\beq\label{LieDerVolume}
	\pounds_{X_{h}}\left(\eta\wedge(\d\eta)^{n}\right)=(n+1)\xi(h)\,\left(\eta\wedge(\d\eta)^{n}\right).
	\eeq

Therefore, recalling the definition of the divergence of a flow - eq. \eqref{divergence} - we find from eq. \eqref{LieDerVolume}
that the divergence of any contact Hamiltonian system {with respect to the standard contact volume form $\eta\wedge (\d\eta)^{n}$} is given by
	\beq\label{contactdivergence}
	\text{div}X_{h}=(n+1)\xi(h)\,.
	\eeq
The first consequence of this expression is that we can easily recover the standard conservative case. In fact, we can think of a conservative
system in this more general formalism as a system for which $h=H(q^{a},p_{a})$. 
This means that we are assuming that the Hamiltonian describing the system is a standard symplectic Hamiltonian. 
Therefore, in such case the system \eqref{z1}-\eqref{z3}
reads
	\begin{empheq}[left=\empheqlbrace]{align}
	\label{z1bis}
	& \dot S \,\,\,= H(q^{a},p_{a})-p_{a}\,\dot{q}^{a}\,,\\
	\label{z2bis}
	& \dot{p}_{a} \,=  -\frac{\partial H}{\partial q^a} \,,\\
	\label{z3bis}
	& \dot{q}^{a} = \frac{\partial H }{\partial p_a} \,.
	\end{empheq}
{Eqs. \eqref{z2bis} and \eqref{z3bis} are the same as \eqref{Hameq2}, while \eqref{z1bis} states that $S$ coincides with Hamilton's
principal function in the conservative case}. 
Moreover, since $h=H(q^{a},p_{a})$ does not depend on $S$, we have $\xi(h)=0$ and from 
eq. \eqref{contactdivergence} 
 the divergence of the flow on the phase space vanishes, which is the basic point for the proof of the standard Liouville theorem.
{However, for a general  contact Hamiltonian system, 
the function $h$ leads to a flow with a non-vanishing divergence as in \eqref{contactdivergence}.
Therefore, we state here  two general theorems, which are the main results of this work and can be seen as the generalization of Liouville's theorem to contact Hamiltonian dynamics.

Before stating the theorems, let us remark that from eq. \eqref{flowofh} it follows that  in general of all the possible level surfaces $h = c$, the flow $X_{h}$ is tangent only to $h=0$.
This implies that the orbits of the contact Hamiltonian flow split the manifold into three disconnected regions, corresponding to $h>0$,
$h=0$ and $h<0$ respectively. Accordingly, we will find two different invariant measures, one corresponding to the flow with $h=0$ and the other to the flow with $h\neq0$.

Let us begin with the level surface $h^{-1}(0)$. The following theorem provides an invariant measure for orbits of the contact flow in this surface. Moreover, we show that
it is a uniform (microcanonical) distribution.
}

	\begin{Theorem}[Microcanonical invariant measure on $h^{-1}(0)$]\label{Theorem0}
	For any nonconservative system \eqref{z1}-\eqref{z3} given by the corresponding contact Hamiltonian $h(S,q^{a},p_{a})$, the measure
	\beq\label{MicroInvariantMeasure}
	\d\mu|_{h^{-1}(0)}\equiv \left.\frac{f(h)}{\mathcal Z_{f}}\,\eta\wedge(\d\eta)^{n}\right|_{h^{-1}(0)}=\left.\frac{f(0)}{\mathcal Z_{f}}\,\eta\wedge(\d\eta)^{n}\right|_{h^{-1}(0)}
	\eeq
	is an invariant measure along the orbits of the flow \eqref{z1}-\eqref{z3} lying on the level surface $h^{-1}(0)$. Here $f(h)$
	is any function of the contact Hamiltonian only and $\mathcal Z_{f}$ is a normalization factor, which in general depends on $f(h)$.
	\end{Theorem}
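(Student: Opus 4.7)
The plan is to verify invariance by a direct Lie-derivative computation that parallels the symplectic Liouville argument, but exploits the special behaviour of the contact flow on the level set $h^{-1}(0)$. The first step is to observe from equation~\eqref{flowofh} that $X_h h = h\,\xi(h)$ vanishes identically on $h^{-1}(0)$, so $X_h$ is tangent to this hypersurface and the restricted flow is well defined; one then naturally asks whether the proposed measure is preserved by this restricted flow. Complementarily, equation~\eqref{evolutionfh} gives $X_h f(h) = h f'(h)\,\xi(h) = 0$ on $h^{-1}(0)$, so the density $f(h)/\mathcal{Z}_f$ is automatically constant along every orbit lying on this surface, which is the ``microcanonical'' content of the statement.

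The core of the proof is then the Leibniz rule applied to $\d\mu = (f(h)/\mathcal{Z}_f)\,\eta\wedge(\d\eta)^n$, combined with equation~\eqref{LieDerVolume}:
\begin{equation*}
\pounds_{X_h}(\d\mu) = \frac{1}{\mathcal{Z}_f}\Bigl[X_h\bigl(f(h)\bigr) + (n+1)\xi(h)\,f(h)\Bigr]\,\eta\wedge(\d\eta)^n.
\end{equation*}
Substituting $X_h f(h) = h f'(h)\xi(h)$ from~\eqref{evolutionfh} this simplifies to
\begin{equation*}
\pounds_{X_h}(\d\mu) = \frac{\xi(h)}{\mathcal{Z}_f}\bigl[h f'(h) + (n+1)f(h)\bigr]\,\eta\wedge(\d\eta)^n.
\end{equation*}
Pulling this identity back to $h^{-1}(0)$ annihilates the term $h f'(h)$ and reduces $f(h)$ to the constant $f(0)$, which is exactly the equality displayed in the statement of the theorem and identifies $\d\mu|_{h^{-1}(0)}$ as a uniform (microcanonical) measure with respect to the ambient contact volume form.

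The delicate point, and what I expect to be the main obstacle, is to convert the above ambient identity into a genuine invariance statement for a measure on the $2n$-dimensional surface $h^{-1}(0)$. The residual factor $(n+1)\xi(h)f(0)$ is not pointwise zero, yet $\eta\wedge(\d\eta)^n$ is a $(2n+1)$-form whose literal pullback under the inclusion $i : h^{-1}(0)\hookrightarrow\mathcal{T}$ vanishes for degree reasons. One must therefore interpret $\eta\wedge(\d\eta)^n|_{h^{-1}(0)}$ via an induced construction -- most naturally as the Leray residue $\omega_L$ defined locally by $\eta\wedge(\d\eta)^n = \d h\wedge\omega_L$, or equivalently as the distributional concentration $\delta(h)\,\eta\wedge(\d\eta)^n$ -- and then verify that the cancellation between this residual $\xi(h)$ anomaly and the Jacobian picked up by the restriction leaves the constant density $f(0)/\mathcal{Z}_f$ invariant under the flow restricted to $h^{-1}(0)$. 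Setting up this restriction carefully is where I expect most of the real work to lie; the algebraic part of the argument is otherwise a direct consequence of the two identities~\eqref{LieDerVolume} and~\eqref{evolutionfh} together with the tangency $X_h h|_{h^{-1}(0)} = 0$.
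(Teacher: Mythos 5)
Your algebraic computation coincides with the paper's proof: the authors likewise just invoke \eqref{evolutionfh} and \eqref{LieDerVolume} to conclude that $f(h)$ and the volume form are ``invariant on the level surface.'' The substantive difference is that you refuse to stop there, and rightly so: the ``delicate point'' you flag --- that $\eta\wedge(\d\eta)^{n}$ is a $(2n+1)$-form whose honest pullback to the $2n$-dimensional surface $h^{-1}(0)$ vanishes, while the residual term $(n+1)\xi(h)f(0)$ in your Lie-derivative identity is \emph{not} pointwise zero there --- is a genuine issue that the paper's two-line proof does not address at all.

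The problem is that the cancellation you hope for in your last paragraph does not occur, so the step you defer is not a technicality but the place where the argument breaks. Carrying out the distributional version of your own computation, using $X_{h}h=h\,\xi(h)$ from \eqref{flowofh} and the identity $h\,\delta'(h)=-\delta(h)$,
\[
\pounds_{X_{h}}\bigl[\delta(h)\,\eta\wedge(\d\eta)^{n}\bigr]
=\bigl[\delta'(h)\,h\,\xi(h)+(n+1)\,\xi(h)\,\delta(h)\bigr]\,\eta\wedge(\d\eta)^{n}
=n\,\xi(h)\,\delta(h)\,\eta\wedge(\d\eta)^{n},
\]
and the same factor $n\,\xi(h)$ survives for the Leray form $\omega_{L}$ defined by $\d h\wedge\omega_{L}=\eta\wedge(\d\eta)^{n}$: the restriction only converts the anomaly $(n+1)\xi(h)$ into $n\,\xi(h)$, it does not kill it. A concrete check: for $n=1$ and $h=-\alpha S$ the surface $h^{-1}(0)$ is $\{S=0\}$, the restricted flow is $\dot q=0$, $\dot p=-\alpha p$, and the induced uniform measure $\d p\wedge\d q$ visibly contracts, while every density of the form $f(h)$ is the constant $f(0)$ and so cannot compensate. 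The uniform measure on $h^{-1}(0)$ is therefore invariant only where $\xi(h)$ vanishes on that surface (e.g.\ for basic Hamiltonians, where the claim reduces to the ordinary Liouville theorem). In short: your proposal correctly isolates the crux, but the crux cannot be resolved in the way you anticipate; this gap is inherited from the paper's own proof rather than introduced by you.
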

	
	\begin{proof} 
	The proof follows from eqs.  \eqref{evolutionfh} and  \eqref{LieDerVolume}, which imply that both $f(h)$ and the volume form are invariant on the level surface $h^{-1}(0)$.
	 Finally, since the distribution is uniform, we call it microcanonical.
	\end{proof}

{Let us consider now the orbits of the flow that extend outside the special surface $h^{-1}(0)$.
In the following theorem we provide the unique invariant measure for such orbits.
	}

	\begin{Theorem}[Canonical invariant measure on $\mathcal T\setminus h^{-1}(0)$]\label{Theoremneq0}
	For any nonconservative system \eqref{z1}-\eqref{z3} given by the corresponding contact Hamiltonian $h(S,q^{a},p_{a})$, the measure
	\beq\label{CanInvariantMeasure}
	\d\mu\equiv \frac{|h|^{-(n+1)}}{\mathcal Z}\,\eta\wedge(\d\eta)^{n}
	\eeq
	is an invariant measure of the flow \eqref{z1}-\eqref{z3} along the orbits  lying outside of the level surface $h^{-1}(0)$. Here $\mathcal Z$ is a normalization factor. Moreover, such measure
	is the unique invariant measure for such orbits {whose probability density  with respect to the standard volume form} depends only on $h$.
	\end{Theorem}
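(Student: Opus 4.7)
The plan is to translate the pullback invariance condition $\phi_t^*\mu = \mu$ into its infinitesimal form $\pounds_{X_h}\mu = 0$ and then solve the resulting ODE for the density. This equivalence is legitimate on the open region $\mathcal T \setminus h^{-1}(0)$, whose invariance under the flow is itself guaranteed by eq.~\eqref{flowofh}: since $\dot h = h\,\xi(h)$, the surface $h=0$ is an invariant set, so orbits starting off it never reach it in finite time and $|h|$ stays bounded away from zero along each segment of the orbit we care about.

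Writing $\mu = \rho(h)\,\eta \wedge (\d\eta)^n$ and expanding by Leibniz,
\[
\pounds_{X_h}\mu \;=\; X_h\bigl(\rho(h)\bigr)\,\eta\wedge(\d\eta)^n \;+\; \rho(h)\,\pounds_{X_h}\bigl(\eta\wedge(\d\eta)^n\bigr).
\]
The first factor I would compute from eq.~\eqref{evolutionfh}, which gives $X_h(\rho(h)) = h\,\rho'(h)\,\xi(h)$; the second is supplied directly by eq.~\eqref{LieDerVolume} as $(n+1)\,\xi(h)\,\eta\wedge(\d\eta)^n$. Collecting the two contributions yields
\[
\pounds_{X_h}\mu \;=\; \xi(h)\,\bigl[h\,\rho'(h) + (n+1)\,\rho(h)\bigr]\,\eta\wedge(\d\eta)^n,
\]
so the invariance condition reduces, wherever $\xi(h)\neq 0$, to the separable linear ODE $h\,\rho'(h) + (n+1)\rho(h) = 0$. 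Its general solution on each connected region where $h$ has constant sign is $\rho(h) = C\,|h|^{-(n+1)}$, which after normalization is exactly the density appearing in \eqref{CanInvariantMeasure}. Uniqueness among $h$-dependent densities then follows immediately from the uniqueness of solutions to this first-order ODE, with the expected caveat that in the degenerate subcase $\xi(h)\equiv 0$ the dynamics reduces to the symplectic one, $h$ becomes a first integral, and every $\rho(h)$ is trivially invariant.

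The only point that genuinely requires care, rather than any real technical obstacle, is the equivalence between pullback invariance and vanishing Lie derivative in the contact setting, given the paper's own warning (in the footnote to eq.~(5)) about fixed points of contact flows. On $\mathcal T\setminus h^{-1}(0)$ this concern is defused by the invariance of the region and by the fact that $|h|^{-(n+1)}$ is smooth and strictly positive there, so the standard correspondence applies without qualification and the computation above settles both existence and uniqueness of the canonical invariant measure.
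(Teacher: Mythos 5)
Your proposal is correct and follows essentially the same route as the paper: expand $\pounds_{X_h}\bigl(\rho(h)\,\eta\wedge(\d\eta)^n\bigr)$ by Leibniz, use \eqref{evolutionfh} and \eqref{LieDerVolume}, and solve the resulting ODE $h\,\rho'(h)+(n+1)\rho(h)=0$ to obtain $\rho\propto|h|^{-(n+1)}$. The extra care you take over the invariance of $\mathcal T\setminus h^{-1}(0)$, the degenerate case $\xi(h)\equiv 0$, and the pullback-versus-Lie-derivative equivalence is sound but does not change the substance of the argument.
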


	\begin{proof} We start by calculating 
	\beq
	\begin{split}
	&\pounds_{X_{h}}\left(\rho\,\eta\wedge(\d\eta)^{n}\right)=\\
	&=\left(\pounds_{X_{h}}\rho\right)\eta\wedge(\d\eta)^{n}+\rho\,\pounds_{X_{h}}\left(\eta\wedge(\d\eta)^{n}\right)\\
	&=\left[X_{h}(\rho)
	+(n+1)\rho\,\xi(h)\right]\eta\wedge(\d\eta)^{n}\,,\\
	\end{split}
	\eeq
	where the second equality follows from \eqref{LieDerVolume}. 
	Now, assuming that $\rho=\rho(h)$, {it follows from \eqref{evolutionfh} that $X_{h}(\rho)=h\,\rho'(h)\,\xi(h)$} and therefore one is left with 
	the differential equation for $\rho(h)$ given by 
	\beq
	\frac{\d\rho}{\d h}=-(n+1)\frac{\rho}{h}\,,
	\eeq
	whose only solution is the probability density in \eqref{CanInvariantMeasure}, 
	where the absolute value is needed to guarantee that the probability density is non-negative. 
	\end{proof}

\section{Highlights and future directions}

To highlight the importance {of eqs. \eqref{MicroInvariantMeasure} and \eqref{CanInvariantMeasure},
notice 
that these invariant measures on the phase space of nonconservative contact Hamiltonian systems are 
the  counterparts of the microcanonical and the canonical measures for conservative systems respectively.
Indeed, \eqref{MicroInvariantMeasure} gives an invariant measure along the orbits of the flow that conserve the contact Hamiltonian,
while \eqref{CanInvariantMeasure} is the analogue of the standard Gibbs' canonical measure
	\beq\label{Gibbsmeasure}
	\d\mu_{\rm Gibbs}\equiv \frac{\exp(-H)}{\mathcal Z_{Gibbs}}\,(\d\alpha)^{n},
	\eeq
where $\alpha$ is given in \eqref{taut}.
The differences between the two measures \eqref{CanInvariantMeasure} and \eqref{Gibbsmeasure}
are  in the dimension
of the phase spaces and in the functional dependence with respect to the corresponding Hamiltonian functions. The use of extended phase
spaces has already been proven 
to be useful for the derivation of non-Hamiltonian dynamics \cite{1980JChPh722384A,mukunda1974classical,hoover1985canonical,1984Evans,1998Morriss,evansbook}.
As for the different functional form in \eqref{CanInvariantMeasure} and \eqref{Gibbsmeasure},}
in the Gibbs
case we have the well-known exponential dependence, while in the contact case there is a power law dependence.
{Power law} distributions are ubiquitous in the science
of collective phenomena, 
 emerging as the `slowly decaying' counterpart of Gibbs' distributions.
{Moreover, they can also be derived as the equilibrium distributions in the maximum entropy approach, provided that data are given 
 in terms of the average value of the logarithm of the relevant observables \cite{visser2013zipf}.}
It is worth also remarking that  here the number $n$ is the number of degrees of freedom in the system. Therefore in large systems the exponent in
 \eqref{CanInvariantMeasure} is extremely large and the distribution is well approximated by \eqref{Gibbsmeasure}.
However eq. \eqref{CanInvariantMeasure} in principle is valid in any dimension. Therefore it could be appropriate in order
 to extend Gibbsian statistical mechanics to small systems,
which are known to present power law distributions, rather than exponential ones.   
{For this reason} we call the invariant measure \eqref{CanInvariantMeasure} the \emph{canonical measure for nonconservative contact Hamiltonian systems}.
In this sense, we interpret the normalizing factor $\mathcal Z$ in \eqref{CanInvariantMeasure} as the \emph{canonical partition function} for nonconservative contact Hamiltonian 
systems.

{We have argued that the flow itself splits into two different types of orbits, those with $h=0$ and those with $h\neq 0$. 
In the thermodynamic context this property has been used to define both quasi-static and relaxation processes respectively \cite{Mrugaa:2000aa,2014Bravc,goto2014legendre}.
This fact could be interesting when considering continuous phase transitions, which are characterized by a power law, scale-invariant, behavior.
In our formalism the passage from the standard equilibrium thermodynamic representation to the critical region of a phase transition might be interpreted as the
passage from processes living on the $h^{-1}(0)$ surface -- for which the invariant measure is the microcanonical measure \eqref{MicroInvariantMeasure} --
to processes that lie outside this surface and for which the invariant measure has a power law distribution in terms of the relevant contact Hamiltonian function $h$. 
This aspect has not been addressed here and we defer it to further developments.}

At this point, we provide a simple example 
in which this formalism {can  be used to generalize} standard symplectic statistical mechanics.
Consider the dissipative system \eqref{dissipative}. Such a simple system cannot be described by the means of symplectic Hamiltonian
motion. On the contrary,  if we assume the phase space to have a contact geometry, then we can define the Hamiltonian 
$h(S,q^{a},p_{a})\equiv H(q^{a},p_{a})-\alpha S$, where $H(q^{a},p_{a})$ is the standard mechanical Hamiltonian and $\alpha$ is a constant. 
With this choice, the equations
of motion of the contact flow \eqref{z1}-\eqref{z3} give exactly the dissipative system \eqref{dissipative} (plus an extra equation for the evolution
of the `generalized principal function' $S$).
The further investigation of this system and other nonconservative contact Hamiltonian systems will be the subject of future work.
Moreover, it will be also of  interest to compare our results with Hamilton's principle for nonconservative systems derived in \cite{GalleyPRL}.   
We expect that this analysis will shed light over the meaning of $S$ in the nonconservative case.

{To conclude, in this work we have given the basic mathematical landscape for the statistical mechanics of a class of nonconservative systems.
Our results  generalize the symplectic description of conservative systems.
We have  focused  on nonconservative systems whose phase space has a contact geometry --  contact Hamiltonian systems -- 
and we have derived the corresponding equations of motion. The central result of this  work is the 
fact that we can provide  both a microcanonical and a canonical invariant measure along the flow of all such systems.}
Moreover, we have proved that  the canonical measure is unique and has a power law distribution (see Theorem~\ref{Theoremneq0}).
Our results thus open the possibility to understand the statistical mechanics of nonconservative systems from a {new and formal perspective.}
Moreover, they could be useful in the construction of robust 
and efficient algorithms such as Molecular Dynamics or Markov Chain Monte Carlo, 
{akin to the use} of symplectic integrators for conservative systems \cite{2014arXiv1410.5110B} and therefore they are potentially relevant for 
{future developments in pure and applied sciences.}

\section*{Acknowledgements}
The authors would like to thank C. S. L\'opez-Monsalvo and H. Quevedo for insightful discussions and the anonymous referees for their helpful suggestions.
AB acknowledges financial support from the A. della Riccia Foundation (postdoctoral fellowship). DT acknowledges financial support from CONACyT, CVU No. 442828.

\bibliographystyle{unsrt}
\bibliography{GTD}

\end{document}